\declaretheorem[name=Theorem,numberwithin=section]{theorem}
\declaretheorem[name=Lemma,sibling=theorem]{lemma}
\newcommand{\graphicsScale}{1.}
\newcommand{\Figure}{Fig.}
\newcommand{\OO}{\mathcal{O}} 
\newcommand{\dd}{\mathbf{d}} 
\newcommand{\D}{\mathbf{D}} 
\newcommand{\DTSP}{\D_\mathtt{TSP}} 
\newcommand{\DRB}{\D_\mathtt{RB}} 
\newcommand{\DMM}{\D_\mathtt{MM}} 
\newcommand{\DGG}{\D_\mathtt{G}} 
\newcommand{\flip}[2]{#1 \twoheadrightarrow #2} 
\title{On the Longest Flip Sequence to Untangle Segments in the Plane\texorpdfstring{\thanks{A version of this paper appears in Walcom'23. This work is supported by the French ANR PRC grant ADDS (ANR-19-CE48-0005).}}{}}
\author{
  Guilherme D. da Fonseca\thanks{Aix-Marseille Université and LIS, France. guilherme.fonseca@lis-lab.fr} 
  \and Yan Gerard\thanks{Université Clermont Auvergne and LIMOS, France. \{yan.gerard, bastien.rivier\}@uca.fr} 
  \and Bastien Rivier\textsuperscript{$\ddagger$} 
}
\begin{document}
\maketitle

\begin{abstract}
  A set of segments in the plane may form a Euclidean TSP tour or a matching, among others. Optimal TSP tours as well as minimum weight perfect matchings have no crossing segments, but several heuristics and approximation algorithms may produce solutions with crossings. To improve such solutions, we can successively apply a flip operation that replaces a pair of crossing segments by non-crossing ones. This paper considers the maximum number $\D(n)$ of flips performed on $n$ segments. First, we present reductions relating $\D(n)$ for different sets of segments (TSP tours, monochromatic matchings, red-blue matchings, and multigraphs). Second, we show that if all except $t$ points are in convex position, then $\D(n) = \OO(tn^2)$, providing a smooth transition between the convex $\OO(n^2)$ bound and the general $\OO(n^3)$ bound. Last, we show that if instead of counting the total number of flips, we only count the number of distinct flips, then the cubic upper bound improves to $\OO(n^{8/3})$. 
\end{abstract}

\section{Introduction}
\label{sec:intro}

In the Euclidean Travelling Salesman Problem (TSP), we are given a set $P$ of $n$ points in the plane and the goal is to produce a closed tour connecting all points of minimum total Euclidean length. The TSP problem, both in the Euclidean and in the more general graph versions, is one of the most studied NP-hard optimization problems, with several approximation algorithms, as well as powerful heuristics (see for example~\cite{ABCC11,Dav10,GuPu06}). Multiple PTAS are known for the Euclidean version~\cite{Aro96,Mit99,RaSm98}, in contrast to the general graph version that unlikely admits a PTAS~\cite{ChC19}. It is well known that the optimal solution for the Euclidean TSP is a simple polygon, i.e., has no crossing segments, and in some situation a crossing-free solution is necessary~\cite{BuKi22}. However, most approximation algorithms (including Christofides and the PTAS), as well as a variety of simple heuristics (nearest neighbor, greedy, and insertion, among others) may produce solutions with pairs of crossing segments. In practice, these algorithms may be supplemented with a local search phase, in which crossings are removed by iteratively modifying the solution.

\begin{figure}[htb]
  \centering
  \includegraphics[scale=\graphicsScale]{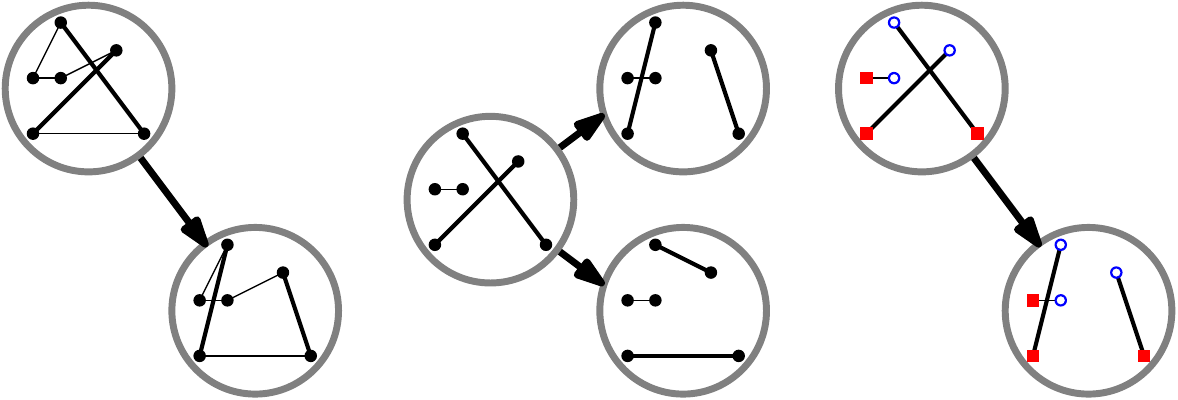}\\
  (a) \hspace{0.3\textwidth}(b)  \hspace{0.3\textwidth}(c)
  \caption{Examples of flips in a (a) TSP tour, (b) monochromatic matching, and (c) red-blue matching.}
  \label{fig:flip}
\end{figure}

Given a Euclidean TSP tour, a \emph{flip} is an operation that removes a pair of crossing segments and adds a new pair of segments preserving a tour (\Figure~\ref{fig:flip}(a)). If we want to find a tour without crossing segments starting from an arbitrary tour, it suffices to find a crossing, perform a flip, and repeat until there are no crossings. It is easy to see that the process will eventually finish, as the length of the tour can only decrease when we perform a flip. Since a flip may create several new crossings, it is not obvious how to bound the number of flips performed until a crossing-free solution is obtained. Let $\DTSP(n)$ denote the maximum number of flips successively performed on a TSP tour with $n$ segments. An upper bound of $\DTSP(n) = \OO(n^3)$ is proved in~\cite{VLe81}, while the best lower bound known is $\DTSP(n) = \Omega(n^2)$.
In contrast, if the points $P$ are in convex position, then tight bounds of $\Theta(n^2)$ are easy to prove.

In this paper, we show that we can consider a conceptually simpler problem of flips in matchings (instead of Hamiltonian cycles), in order to bound the number of flips to both problems. Next, we describe this \emph{monochromatic matching version}.

Consider a set of $n$ line segments in the plane defining a matching $M$ on a set $P$ of $2n$  points. In this case, a \emph{flip} replaces a pair of crossing segments by a pair of non-crossing ones using the same four endpoints (\Figure~\ref{fig:flip}(b)). Notice that, in contrast to the TSP version, one of two possible pairs of non-crossing segments is added.
As previously, let $\DMM(n)$ denote the maximum number of flips successively performed on a monochromatic matching with $n$ segments.
In Section~\ref{sec:reduction}, we show that $\DMM(n) \leq \DTSP(6n) \leq \DMM(6n)$, hence it suffices to prove asymptotic bounds for $\DMM(n)$ in order to bound $\DTSP(n)$. 

A third and last version of the problem that we consider is the \emph{red-blue matching version}, in which the set $P$ is partitioned into two sets of $n$ points each called \emph{red} and \emph{blue}, with segments only connecting points of different colors (\Figure~\ref{fig:flip}(c)). Let $\DRB(n)$ denote the analogous maximum number of flips successively performed on a red-blue matching with $n$ segments. The red-blue matching version has been thoroughly studied~\cite{BMS19,DDFGR22}. In Section~\ref{sec:reduction}, we also show that $\DMM(n) \leq \DRB(2n) \leq \DMM(2n)$ and, as a consequence, asymptotic bounds for the monochromatic matching version also extend to the red-blue matching version. We use the notation $\D(n)$ for bounds that hold in all three versions.

For all the aforementioned versions, special cases arise when we impose some constraint on the location of the points $P$. In the \emph{convex case}, $P$ is in \emph{convex} position. Then it is known that, for all three versions, $\D(n) = \Theta(n^2)$~\cite{BMS19,BoM16}. This tight bound contrasts with the gap for the general case bounds.

For all three versions in the convex case, $\D(n) \leq \binom{n}{2}$ as the number of crossings decreases at each flip.
The authors have recently shown that without convexity $\DRB(n) \geq 1.5 \binom{n}{2} - \frac{n}{4}$~\cite{DDFGR22}, which is higher than the convex bound. 
A major open problem conjectured in~\cite{BoM16} is to determine if the non-convex bounds are $\Theta(n^2)$ as the convex bounds.
Unfortunately, the best upper bound known for the non-convex case remains $\D(n) = \OO(n^3)$~\cite{VLe81} since 1981\footnote{While the paper considers only the TSP version, the proof of the upper bound also works for the matching versions, as shown in~\cite{BoM16}.}, despite recent work on this specific problem~\cite{BMS19,BoM16,DDFGR22}. The best lower bound known is $\D(n) = \Omega(n^2)$~\cite{BoM16,DDFGR22}.

The argument for the convex case bound of $\D(n) \leq \binom{n}{2}$ breaks down even if all but one point are in convex position, as the number of crossings may not decrease. In Section~\ref{sec:ConvexToGeneral}, we present a smooth transition between the convex and the non-convex cases. We show that, in all versions, if there are $t$ points anywhere and the remaining points are in convex position with a total of $n$ segments, then the maximum number of flips is $\OO(tn^2)$. 

Finally, in Section~\ref{sec:distinct}, we use a balancing argument similar to the one of Erdös et al.~\cite{ELS73} to show that if, instead of counting the number of flips, we count the number of distinct flips (two flips are the same if they change the same set of four segments), then we get a bound of $\OO(n^{8/3})$.

\subsection{Related Reconfiguration Problems}

Combinatorial reconfiguration studies the step-by-step transition from one solution to another, for a given combinatorial problem. Many reconfiguration problems are presented in~\cite{Heu13}.

It may be tempting to use an alternative definition for a flip in order to remove crossings and reduce the length of a TSP tour. The \emph{2OPT flip} is not restricted to crossing segments, as long as it decreases the Euclidean length of the tour. However, the number of 2OPT flips performed may be exponential~\cite{ERV14}.

Another important parameter $\dd(n)$ is the \emph{minimum} number of flips needed to remove crossings from any set of $n$ segments. When the points are in convex position, then it is known that $\dd(n) = \Theta(n)$ in all three versions~\cite{BMS19,DDFGR22,OdW07,WCL09}. If the red points are on a line, then  $\dd_\texttt{RB}(n) = \OO(n^2)$~\cite{BMS19,DDFGR22}. In the monochromatic matching version, if we can choose which pair of segments to add in a flip, then $\dd_\texttt{MM}(n) = \OO(n^2)$~\cite{BoM16}. For all remaining cases, the best bounds known are $\dd(n) = \Omega(n)$ and $\dd(n) = \OO(n^3)$.

It is also possible to relax the flip definition to all operations that replace two segments by two others with the same four endpoints, whether they cross or not~\cite{BeI08,BeI17,BBH19,BJ20,EKM13,Wil99}. This definition has also been generalized to multigraphs with the same degree sequence~\cite{Hak62,Hak63,phdJof}. 

In the context of triangulations, a flip consists of removing one segment and adding another one while preserving a triangulation. 
Reconfiguration problems for triangulations are also widely studied~\cite{AMP15,HNU99,Law72,LuP15,NiN18,Pil14}.

\subsection{Definitions}
\label{sec:definitions}

Consider a set of points $P$.
We say that two segments $s_1,s_2 \in \binom{P}{2}$ \emph{cross} if they intersect in exactly one point which is not an endpoint of either $s_1$ or $s_2$. Furthermore, a line $\ell$ and a segment $s$ \emph{cross} if they intersect in exactly one point that is not an endpoint of $s$.

Let $s_1,s_1',s_2,s_2'$ be four segments of $\binom{P}{2}$ forming a cycle with $s_1,s_2$ crossing.
We define a \emph{flip} $f = \flip{s_1,s_2}{s_1',s_2'}$ as the function that maps any set (or multiset) of segments $M$ containing the two crossing segments $s_1$ and $s_2$ to $f(M) = M \cup \{s_1',s_2'\} \setminus \{s_1,s_2\}$ provided that $f(M)$ satisfies the property required by the version of the problem in question (being a monochromatic matching, a red-blue matching, a TSP tour...).
This leads to the most general version of the problem, called the \emph{multigraph} version. We note that a flip preserves the degree of every point. However, a flip may not preserve the multiplicity of a segment, which is why, in certain versions, we must consider multisets and multigraphs and not just sets and graphs.

A \emph{flip sequence} of \emph{length} $m$ is a sequence of flips $f_1,\ldots,f_m$ with a corresponding sequence of (multi-)sets of segments $M_0,\ldots,M_m$ such that $M_i = f_i(M_{i-1})$ for $i=1,\ldots,m$.
Unless mentioned otherwise, we assume general position for the points in $P$ (no three collinear points).

Given a property $\Pi$ over a multiset of $n$ line segments, we define $\D_\Pi(n)$ as the maximum length of a flip sequence such that every multiset of $n$ segments in the sequence satisfies property $\Pi$. We consider the following properties $\Pi$: $\texttt{TSP}$ for Hamiltonian cycle, $\texttt{RB}$ for red-blue matching, $\texttt{MM}$ for monochromatic matching, and $\texttt{G}$ for multigraph. Notice that if a property $\Pi$ is stronger than a property $\Pi'$, then $\D_\Pi(n) \leq \D_{\Pi'}(n)$.

\section{Reductions}
\label{sec:reduction}

In this section, we provide a series of inequalities relating the different versions of $\D(n)$. We show that all different versions of $\D(n)$ have the same asymptotic behavior.

\begin{theorem}
  \label{thm:reduction}
  For all positive integer $n$, we have the following relations
  \begin{align}
    \DMM(n) & = \DGG(n), \label{eq:0}\\
    2\DMM(n) & \leq \DRB(2n)~ \leq \DMM(2n), \label{eq:1}\\
    2\DRB(n) & \leq \DTSP(3n) \leq \DMM(3n). \label{eq:2}
  \end{align}
\end{theorem}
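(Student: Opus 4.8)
The plan is to prove the six inequalities packaged in the statement one at a time, grouping the routine ones. Two of them — $\DMM(n) \le \DGG(n)$ in \eqref{eq:0} and $\DRB(2n) \le \DMM(2n)$ in \eqref{eq:1} — are immediate from the monotonicity noted in Section~\ref{sec:definitions}, since a monochromatic matching is a (very special) multigraph and a red-blue matching is a monochromatic matching. For the reverse direction $\DGG(n) \le \DMM(n)$ of \eqref{eq:0}, as well as for $\DTSP(3n) \le \DMM(3n)$ in \eqref{eq:2}, I would use one device, \emph{vertex splitting}. Given a flip sequence $M_0,\dots,M_m$ of multigraphs (respectively, of Hamiltonian cycles) on $n$ segments — all with the same degree sequence, since flips preserve degrees — replace each vertex $v$ of degree $d$ by $d$ points inside a tiny disk $D_v$, and distribute the $d$ segment-slots at $v$ among these points. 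This can be done consistently along the whole sequence: the four vertices of a flip are distinct (crossing segments share no endpoint), at each of them exactly one incident segment is replaced, and path reversals do not change incidences, so the slot-to-point assignment of $M_0$ propagates unambiguously. With the disks small enough, each pair that gets flipped — the image of a crossing pair of some $M_i$, whose crossing point stays away from all disks — still crosses, and the image of the new pair is a pair of opposite sides of the convex quadrilateral spanned by the four perturbed endpoints, hence non-crossing; so the images form a valid flip sequence of matchings of the same length $m$. Counting endpoints ($\sum_v d_v = 2n$ in the multigraph case, $6n$ for a $3n$-vertex cycle) yields $\DGG(n) \le \DMM(n)$ and $\DTSP(3n) \le \DMM(3n)$.

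For the lower bound $2\DMM(n) \le \DRB(2n)$ in \eqref{eq:1}, I would use a \emph{doubling} construction. Take a monochromatic matching $M$ on $2n$ points realizing a flip sequence of length $m=\DMM(n)$; replace each point $p$ by a red point $p_r$ and a blue point $p_b$ very close to $p$, and replace each segment $pq\in M$ by the two red-blue segments $p_r q_b$ and $p_b q_r$, obtaining a red-blue matching with $2n$ segments. A flip $\flip{uv,u'v'}{uu',vv'}$ of $M$ is then simulated by the two red-blue flips $\flip{u_r v_b,\ u'_b v'_r}{u_r u'_b,\ v_b v'_r}$ and $\flip{u_b v_r,\ u'_r v'_b}{u_b u'_r,\ v_r v'_b}$, and a flip $\flip{uv,u'v'}{uv',u'v}$ of $M$ by $\flip{u_r v_b,\ u'_r v'_b}{u_r v'_b,\ u'_r v_b}$ and $\flip{u_b v_r,\ u'_b v'_r}{u_b v'_r,\ u'_b v_r}$. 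In each case the two segments being flipped are perturbations of the two crossing segments of $M$, hence cross, and the red-blue flip is forced because among the two non-crossing pairings of the four (two red, two blue) endpoints exactly one is red-blue; one checks it is the perturbation of the intended new pair of $M$. So the red-blue matching with $2n$ segments admits a flip sequence of length $2m$, i.e., $\DRB(2n) \ge 2\DMM(n)$.

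The remaining inequality $2\DRB(n) \le \DTSP(3n)$ is the crux. One must build, from a red-blue matching $\{r_1b_1,\dots,r_nb_n\}$ with a flip sequence of length $m=\DRB(n)$, a TSP tour on $3n$ vertices with a flip sequence of length $2m$. The plan is to realize the matching inside a Hamiltonian cycle on the vertices $\{r_i\}\cup\{b_i\}\cup\{x_i\}$, where $x_1,\dots,x_n$ are $n$ auxiliary ``connector'' vertices and each $r_ib_i$ is a cycle edge. The difficulty is that a tour flip is \emph{forced} — only the reconnection keeping a single cycle survives — and, however the matching edges are oriented along the cycle, some pair $r_ib_i,r_jb_j$ is flipped to $\{r_ir_j,b_ib_j\}$ rather than to the red-blue outcome $\{r_ib_j,r_jb_i\}$, and this cannot be repaired by a further flip of those two edges since $r_ir_j$ and $b_ib_j$ no longer cross (they are opposite sides of the convex quadrilateral on $r_i,r_j,b_i,b_j$). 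The connectors are there precisely to restore the missing freedom: I would place them so that each red-blue flip is simulated by exactly two tour flips — roughly, one flip that re-orients/reverses a path through a connector edge and one that performs the endpoint swap — leaving a tour that again contains the flipped red-blue matching among its edges, so the simulation can continue. The step I expect to be the real obstacle is choosing the positions of the $x_i$ and verifying that throughout an arbitrary, adversarial red-blue flip sequence both simulating flips are always between genuinely crossing edges and yield the intended tour; the factor $2$ is exactly the cost of the two-step simulation and $3n=2n+n$ is the matching endpoints plus the connectors.
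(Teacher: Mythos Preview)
Your treatment of five of the six inequalities is correct and essentially coincides with the paper's proof: the monotonicity observations, the vertex-splitting for $\DGG(n)\le\DMM(n)$, and the red--blue doubling for $2\DMM(n)\le\DRB(2n)$ are the same constructions. (For $\DTSP(3n)\le\DMM(3n)$ the paper takes the shortcut $\DTSP(3n)\le\DGG(3n)=\DMM(3n)$ rather than splitting again, but your direct splitting is equally valid.)

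The gap is exactly where you flag it: your argument for $2\DRB(n)\le\DTSP(3n)$ is a plan, not a proof. You introduce connectors $x_i$ but never say where to put them, and your tentative two-step simulation (``one flip through a connector edge, one that performs the swap'') is not the mechanism that works. The idea you are missing is that the extra $n$ points are not generic connectors but \emph{duplicates of the red points}: each $r_i$ is split into two arbitrarily close points $r_i,r'_i$, and the tour is taken to be $r_1,b_1,r'_1,r_2,b_2,r'_2,\ldots,r_n,b_n,r'_n$. Thus every matching edge $r_ib_i$ appears, up to perturbation, \emph{twice} in the tour, as $r_ib_i$ and $b_ir'_i$. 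A red--blue flip $\flip{r_ib_i,r_jb_j}{r_ib_j,r_jb_i}$ is then simulated by the two TSP flips $\flip{r_ib_i,r'_jb_j}{r_ib_j,r'_jb_i}$ and $\flip{r'_ib_i,r_jb_j}{r'_ib_j,r_jb_i}$, which act on disjoint pairs of edges and whose forced tour reconnections simply swap $b_i$ and $b_j$ in the cyclic order; neither flip touches a long edge $r'_kr_{k+1}$. With the duplicates close enough, all required crossings are inherited from the original red--blue matching, and the invariant that each $b_k$ sits between some $r_{\sigma(k)}$ and $r'_{\sigma(k)}$ is preserved, so the simulation propagates through the entire sequence.
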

\begin{proof}

  Equality~\ref{eq:0} can be rewritten $\DGG(n) \leq \DMM(n) \leq  \DGG(n)$. Hence, we have to prove six inequalities. The right-side inequalities are immediate, since the left-side property is stronger than the right-side property (using \texttt{G} instead of \texttt{MM} for inequality~\ref{eq:2}). 

  The proofs of the remaining inequalities follow the same structure: given a flip sequence of the left-side version, we build a flip sequence of the right-side version, having similar length and number of points.

  We first prove the inequality $\DGG(n) \leq \DMM(n)$. 
  A point of degree $\delta$ larger than $1$ can be replicated as $\delta$ points that are arbitrarily close to each other in order to produce a matching of $2n$ points.
  This replication preserves the crossing pairs of segments (possibly creating new crossings). Thus, for any flip sequence in the multigraph version, there exists a flip sequence in the monochromatic matching version of equal length, yielding $\DGG(n) \leq \DMM(n)$.

  \begin{figure}[!ht]
    \centering
    (a) \includegraphics[scale=\graphicsScale]{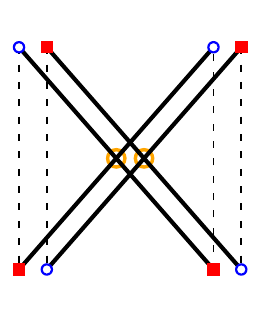}
    \hspace{.2\textwidth}
    (b) \includegraphics[scale=\graphicsScale]{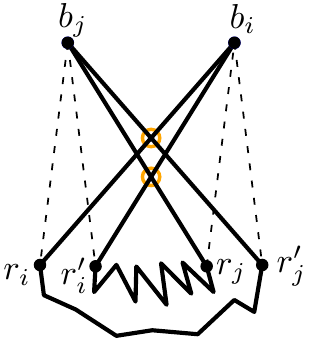}
    \caption{(a) Two red-blue flips to simulate a monochromatic flip. (b) Two TSP flips to simulate a monochromatic flip.}
    \label{fig:simulateFlip}
  \end{figure}

  The left inequality of~\eqref{eq:1} is obtained by duplicating the monochromatic points of the matching $M$ into two arbitrarily close points, one red and the other blue. Then each segment of $M$ is also duplicated into two red-blue segments. We obtain a bichromatic matching $M'$ with $2n$ segments. 
  A crossing in $M$ corresponds to four crossings in $M'$. Flipping this crossing in $M$ amounts to choose which of the two possible pairs of segments replaces the crossing pair. It is simulated by flipping the two crossings in $M'$ such that the resulting pair of double segments corresponds to the resulting pair of segments of the initial flip. These two crossings always exist and it is always possible to flip them one after the other as they involve disjoint pairs of segments.
  \Figure~\ref{fig:simulateFlip}(a) shows this construction.
  A sequence of $m$ flips on $M$ provides a sequence of $2m$ flips on $M'$.
  Hence, $2 \DMM(n) \leq \DRB(2n)$.

  To prove the left inequality of~\eqref{eq:2}, we start from a red-blue matching $M$ with $2n$ points and $n$ segments and build a tour $T$ with $3n$ points and $3n$ segments. We then show that the flip sequence of length $m$ on $M$ provides a flip sequence of length $2m$ on $T$. We build $T$ in the following way. Given a red-blue segment $rb \in M$, the red point $r$ is duplicated in two arbitrarily close points $r$ and $r'$ which are adjacent to $b$ in $T$. We still need to connect the points $r$ and $r'$ in order to obtain a tour $T$. 
  We define $T$ as the tour
  $r_1,b_1,r'_1,\ldots, r_i,b_i,r'_i,\ldots,r_n,b_n,r'_n\ldots$
  where $r_i$ is matched to $b_i$ in $M$ (\Figure~\ref{fig:simulateFlip}(b)).

  We now show that a flip sequence of $M$ with length $m$ provides a flip sequence of $T$ with length $2m$. For a flip $\flip{r_ib_i,r_jb_j}{r_ib_j,r_jb_i}$ on $M$, we perform two successive flips $\flip{r_ib_i,r'_jb_j}{r_ib_j,r'_jb_i}$ and $\flip{r'_ib_i,r_jb_j}{r'_ib_j,r_jb_i}$ on $T$.

  The tour then becomes
  $
  r_1,b_1,r'_1,\ldots, r_i,b_j,r'_i,\ldots, r_j,b_i,r'_j,\ldots,r_n,b_n,r'_n,\ldots
  $
  on which we can apply the next flips in the same way. Hence, $2 \DRB(n) \leq \DTSP(3n)$, concluding the proof. 
\end{proof}

\section{Near Convex Sets}
\label{sec:ConvexToGeneral}

In this section, we bridge the gap between the $\OO(n^2)$ bound on the length of flip sequences for a set $P$ of points in convex position and the $\OO(n^3)$ bound for $P$ in general position.
We prove the following theorem in the monochromatic matching version; the translations to the other versions follows from the reductions from Section~\ref{sec:reduction}, noticing that, when the points have degree $\OO(1)$, all reductions preserve the number of points in non-convex position up to constant factors.

\begin{theorem}
  \label{thm:ConvexToGeneralBis}
  In the monochromatic matching version with $n$ segments, if all except $t$ points of $P$ are in convex position, then the length of a flip sequence is $\OO(t n^2)$.
\end{theorem}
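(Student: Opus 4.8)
\textbf{Proof proposal.}

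Write $B\subseteq P$ for the set of $t$ points that need not be in convex position and $C=P\setminus B$, so $C$ is in convex position and $|C|=2n-t$. Call a matching segment \emph{convex} if both its endpoints lie in $C$, and \emph{bad} otherwise; since every point has degree one, at most $t$ segments are bad at any moment. Classify each flip of the sequence as \emph{convex} if all four points it involves lie in $C$, and as \emph{bad} otherwise. The plan is to bound convex flips by a crossing‑counting potential, bound bad flips separately by $\OO(tn)$, and combine.

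The first ingredient is a convex‑position estimate: if $e,s_1,s_2,s_1',s_2'$ all have endpoints in $C$ and $\flip{s_1,s_2}{s_1',s_2'}$ is a flip, then $e$ crosses at most as many of $s_1',s_2'$ as of $s_1,s_2$. I would prove this from the cyclic order of $C$ on its convex hull: the crossing pair $\{s_1,s_2\}$ are the two diagonals of the convex quadrilateral on their four endpoints and $\{s_1',s_2'\}$ a pair of opposite sides, and a short case analysis according to which of the four hull arcs (determined by those four endpoints) contains each endpoint of $e$ shows that any side crossed by $e$ forces a diagonal crossed by $e$, with multiplicities respected. Let $\Phi(M)$ be the number of crossing pairs of convex segments of $M$, so $0\le\Phi\le\binom n2$. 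For a convex flip, the flipped crossing disappears, the two new convex segments do not cross each other, and by the estimate no other convex–convex crossing is gained, so $\Phi$ drops by at least $1$. For a bad flip, removing two segments can only destroy crossings, while it creates at most one new convex segment, which can cross at most $n$ of the existing convex segments; hence $\Phi$ increases by at most $n$. Summing over the sequence, (number of convex flips) $\le \binom n2 + n\cdot(\text{number of bad flips})$, so the whole theorem reduces to showing that the number of bad flips is $\OO(tn)$.

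For that bound it suffices to show that each point $x\in B$ is an endpoint of a segment involved in $\OO(n)$ flips, i.e.\ that $x$ changes partner $\OO(n)$ times. The key structural fact is that whenever a flip touches $x$, the four involved points are in convex position: if $x$'s current segment $xp$ is one of the flipped pair, then $x$ cannot lie inside the triangle of the other three (else $xp$, running to the opposite vertex, would miss the opposite side and not cross it), and symmetrically none of the other three lies inside a triangle of the remaining ones; so the flip turns the two diagonals of a convex quadrilateral having $x$ as a vertex into two opposite sides, moving $x$'s partner from the vertex opposite $x$ to a vertex adjacent to $x$, in particular to a point of $C$. I would then attach to $x$ a potential of size $\OO(n)$ that changes monotonically (or is reset only $\OO(1)$ times) along the sequence of $x$'s partner changes — a natural candidate being a count of the points of $C$ on a prescribed side of the line through $x$ and its current partner, or a measure of how far that partner has rotated about $x$ — using the convex‑quadrilateral structure above and the fact that each flip strictly shortens the matching. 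Once the partner of every $x\in B$ changes $\OO(n)$ times, the number of bad flips is $\OO(tn)$, and combined with the previous paragraph this gives the claimed $\OO(tn^2)$; the reductions of Theorem~\ref{thm:reduction} then carry the bound to the TSP, red–blue, and multigraph versions.

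I expect the bottleneck to be the last step — bounding the number of bad flips. A priori a single point can participate in $\Omega(n^2)$ flips in a fully general configuration, so the argument must genuinely exploit that the partner of a bad point always lies in the convex set $C$ and that each of its moves is a convex‑quadrilateral flip; isolating the right monotone quantity for the partner sequence, and cleanly handling flips that touch two bad points simultaneously, is where the real work lies. Everything else — the hull case analysis for the convex estimate and the amortized summation for $\Phi$ — should be routine.
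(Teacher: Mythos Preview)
Your decomposition into convex and bad flips, together with the amortized inequality
\[
(\text{convex flips})\ \le\ \tbinom{n}{2}+n\cdot(\text{bad flips}),
\]
is correct and is a genuinely different route from the paper's. The paper does \emph{not} separate the two kinds of flips; instead it keeps the full crossing count $\Phi_X$ (over all segments) and pairs it with a line potential $\Phi_L$ taken over only $\OO(tn)$ lines, namely all lines through at least one non-convex point together with the $\OO(n)$ convex-hull edge lines of $C$. Whenever a flip creates $k+1$ new crossings (so $\Phi_X$ goes up by $k$), each new crossing pins a point $q$ inside the bowtie $p_1p_4p_2p_3$; either $q$ or the relevant flip endpoint is non-convex, giving an $f$-critical line in the first family, or all three are in $C$, in which case a hull-edge line through $q$ separates the two new segments. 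Either way $\Phi_L$ drops by at least $k+1$, so $\Phi_X+\Phi_L$ strictly decreases and is bounded by $\OO(tn^2)$.

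The gap in your proposal is exactly the step you flag: bounding the number of bad flips by $\OO(tn)$, i.e.\ showing each $x\in B$ changes partner $\OO(n)$ times. Neither of the potentials you suggest is monotone. A flip involving $x$ replaces its partner by an \emph{adjacent} vertex of the convex quadrilateral on the four flip points, but that vertex can lie on either angular side of the old partner as seen from $x$, so ``how far the partner has rotated about $x$'' can move either way; likewise the half-plane count relative to the line through $x$ and its partner can go up or down. The convexity of $C$ does not force a direction here, because the fourth flip point can be anywhere in $C$. There is also a small slip: the new partner of $x$ need not lie in $C$ when another point of $B$ is among the four. More seriously, note that if one could show \emph{every} point changes partner $\OO(n)$ times, summing over the $2n$ points would give $\D(n)=\OO(n^2)$ in general position, which is the main open conjecture; so one should not expect the single-bad-point version to fall to a short monotone-potential argument. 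As it stands, the proposal proves the theorem modulo a lemma that is at least as hard as the theorem itself, whereas the paper's potential sidesteps the need to bound bad flips at all.
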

\begin{proof}
  The proof strategy is to combine the potential $\Phi_X$ used in the convex case with the potential $\Phi_L$ used in the general case.
  Given a matching $M$, the potential $\Phi_X(M)$ is defined as the number of crossing pairs of segments in $M$. Since there are $n$ segments in $M$, $\Phi_X(M) \leq \binom{n}{2} = \OO(n^2)$.
  Unfortunately, with points in non-convex position, a flip $f$ might \emph{increase} $\Phi_X$, i.e. $\Phi_X(f(M)) \geq \Phi_X(M)$ (as shown in \Figure~\ref{fig:flip}).

  The potential $\Phi_L$ is derived from the line potential introduced in~\cite{VLe81} but instead of using the set of all the $\OO(n^2)$ lines through two points of $P$, we use a subset of $\OO(tn)$ lines in order to take into account that only $t$ points are in non-convex position. 
  More precisely, let the potential $\Phi_\ell(M)$ of a line $\ell$ be the number of segments of $M$ crossing $\ell$. Note that $\Phi_\ell(M) \leq n$. The potential $\Phi_L(M)$ is then defined as follows: $\Phi_L(M) = \sum_{\ell \in L} \Phi_\ell(M)$.

  We now define the set of lines $L$ as the union of $L_1$ and $L_2$, defined hereafter. Let $C$ be the subset containing the $2n-t$ points of $P$ which are in convex position. Let $L_1$ be the set of the $\OO(tn)$ lines through two points of $P$, at least one of which is not in $C$. Let $L_2$ be the set of the $\OO(n)$ lines through two points of $C$ which are consecutive on the convex hull boundary of $C$.

  Let the potential $\Phi(M) = \Phi_X(M) + \Phi_L(M)$.
  We have the following bounds: $0 \leq \Phi(M) \leq \OO(tn^2)$.
  To complete the proof of Theorem~\ref{thm:ConvexToGeneralBis}, we show that any flip decreases $\Phi$ by at least $1$ unit.

  We consider an arbitrary flip $f=\flip{p_1p_3,p_2p_4}{p_1p_4,p_2p_3}$. Let $p_x$ be the point of intersection of $p_1p_3$ and $p_2p_4$.
  It is shown in~\cite{DDFGR22,VLe81} that $f$ never increases the potential $\Phi_\ell$ of a line $\ell$. More precisely, we have the following three cases:
  \begin{itemize}
  \item The potential $\Phi_\ell$ decreases by $1$ unit if the line $\ell$ separates the final segments $p_1p_4$ and $p_2p_3$ and exactly one of the four flipped points belongs to $\ell$. We call these lines \emph{$f$-critical} (\Figure~\ref{fig:criticalPotentialDrop}(a)).
  \item The potential $\Phi_\ell$ decreases by $2$ units if the line $\ell$ strictly separates the final segments $p_1p_4$ and $p_2p_3$. We call these lines \emph{$f$-dropping} (\Figure~\ref{fig:criticalPotentialDrop}(b)). 
  \item The potential $\Phi_\ell$ remains stable in the remaining cases.
  \end{itemize}
  Notice that, if a point $q$ lies in the triangle $p_1p_xp_4$, then the two lines $qp_1$ and $qp_4$ are $f$-critical (\Figure~\ref{fig:criticalPotentialDrop}(a)). 

  \begin{figure}[!ht]
    \centering
    (a) \includegraphics[page=1,scale=\graphicsScale]{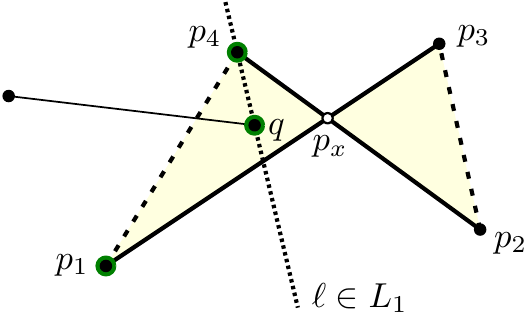}\hfill
    (b) \includegraphics[page=2,scale=\graphicsScale]{criticalPotentialDrop}
    \caption{(a) An $f$-critical line $\ell$ for a flip $f=\flip{p_1p_3,p_2p_4}{p_1p_4,p_2p_3}$. This situation corresponds to case (2a) with $\ell \in L_1$. (b) An $f$-dropping line $\ell$. This situation corresponds to case (2b) with $\ell \in L_2$.} 
    \label{fig:criticalPotentialDrop}
  \end{figure}

  To prove that $\Phi$ decreases, we have the following two cases.

  \textbf{Case 1.} If $\Phi_X$ decreases, as the other term $\Phi_L$ does not increase, then their sum $\Phi$ decreases as desired. 

  \textbf{Case 2.} If not, then $\Phi_X$ increases by an integer $k$ with $0 \leq k \leq n-1$, and we know that there are $k+1$ new crossings after the flip $f$. Each new crossing involves a distinct segment with one endpoint, say $q_i$ ($0 \leq i \leq k$), inside the non-simple polygon $p_1,p_4,p_2,p_3$ (\Figure~\ref{fig:criticalPotentialDrop}). 
  Next, we show that each point $q \in \{q_0,\ldots,q_k\}$ maps to a distinct line in $L$ which is either $f$-dropping or $f$-critical, thus proving that the potential $\Phi_L$ decreases by at least $k+1$ units.

  We assume without loss of generality that $q$ lies in the triangle $p_1p_xp_4$.
  We consider the two following cases.

  \textbf{Case 2a.} If at least one among the points $q, p_1, p_4$ is not in $C$, then either $qp_1$ or $qp_4$ is an $f$-critical line $\ell \in L_1$ (\Figure~\ref{fig:criticalPotentialDrop}(a)).

  \textbf{Case 2b.} If not, then $q, p_1, p_4$ are all in $C$, and the two lines through $q$ in $L_2$ are both either $f$-dropping (the line $\ell$ in \Figure~\ref{fig:criticalPotentialDrop}(b)) or $f$-critical (the line $qp_4$ in \Figure~\ref{fig:criticalPotentialDrop}(b)).
  Consequently, there are more lines $\ell \in L_2$ that are either $f$-dropping or $f$-critical than there are such points $q \in C$ in the triangle $p_1p_xp_4$, and the theorem follows.
\end{proof}

\section{Distinct Flips}
\label{sec:distinct}

In this section, we prove the following theorem in the monochromatic matching version, yet, the proof can easily be adapted to the other versions. We remark that two flips are considered \emph{distinct} if the sets of four segments in the flips are different.

\begin{theorem}
  \label{thm:distinct}
  In all versions with $n$ segments, the number of \emph{distinct} flips in any flip sequence is $\OO(n^{{8}/{3}})$.
\end{theorem}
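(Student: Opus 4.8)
The plan is to combine a global counting bound with a local structural bound, then optimize a threshold parameter, following the Erdős–Szekeres-style balancing argument referenced in the excerpt. A single flip $f=\flip{p_1p_3,p_2p_4}{p_1p_4,p_2p_3}$ is determined by its four endpoints together with a choice of which crossing diagonal pair is ``before'' and which is ``after'' (a bounded number of combinatorial types on $4$ points). So the number of distinct flips appearing in a sequence is at most a constant times the number of $4$-point subsets of $P$ that are ``activated'' during the sequence. The first ingredient I would establish is a monotonicity-type fact supplied by the line potential of~\cite{VLe81}: for each of the $\OO(n^2)$ lines $\ell$ through two points of $P$, the quantity $\Phi_\ell(M)$ (number of segments of $M$ crossing $\ell$) is non-increasing along any flip sequence, and it strictly decreases whenever $\ell$ is $f$-critical or $f$-dropping for the flip $f$ performed. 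Since $\Phi_\ell \le n$ initially, each line can be $f$-dropping or $f$-critical for at most $n$ flips in total across the whole sequence.

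Next I would set a threshold $k$ (to be chosen at the end, expected to be around $n^{2/3}$) and split the distinct flips into two classes. Call a flip \emph{heavy} if, when it is performed on the current matching $M$, it creates at least $k$ new crossings (equivalently $\Phi_X$ jumps by at least $k$, so there are at least $k$ points $q$ inside the polygon $p_1,p_4,p_2,p_3$ as in the proof of Theorem~\ref{thm:ConvexToGeneralBis}); call it \emph{light} otherwise. For heavy flips: each such flip forces at least $k$ distinct lines (of the form $qp_1$ or $qp_4$, one per interior point $q$) to be $f$-critical or $f$-dropping, hence to lose potential. Summing the total potential budget $\sum_\ell \Phi_\ell(M_0) = \OO(n^2)\cdot n = \OO(n^3)$ against the $\ge k$ units of decrease per heavy flip shows there are at most $\OO(n^3/k)$ heavy flips in the whole sequence, hence at most $\OO(n^3/k)$ distinct heavy flips.

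For light flips I would argue by a charging/local argument bounding the number of \emph{distinct} $4$-subsets that can ever be the support of a light flip. The idea: when a light flip is performed on segments $p_1p_3, p_2p_4$, it creates fewer than $k$ new crossings, and the four new segments it produces are ``almost crossing-free'' relative to the rest; I would try to show that a given pair of points can be an endpoint-pair of only $\OO(kn)$ distinct light flips — for instance, by observing that once $p_1p_4$ is installed by a light flip it crosses at most $k-1+\big(\text{crossings it had as a potential diagonal}\big)$ other segments, and tracking how many times a fixed point $p_1$ can re-acquire a neighbor through a light flip using a secondary potential (e.g.\ the number of segments separated from $p_1$ by a rotating line). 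This should give $\OO(kn)$ distinct light flips per point, hence $\OO(kn^2)$ distinct light flips overall. Balancing $n^3/k$ against $kn^2$ gives $k=\sqrt{n}$ and a bound of $\OO(n^{5/2})$ — too weak — so to reach $\OO(n^{8/3})$ I would instead bound the light flips more cleverly by $\OO(k^2 n)$ (charging each light flip to an ordered pair of its $<k$ interior points together with the rotation count of that pair), balancing $n^3/k$ against $k^2 n$ to get $k = n^{2/3}$ and the claimed $\OO(n^{8/3})$.

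The main obstacle I anticipate is precisely this second, local bound on distinct light flips: making rigorous the claim that a fixed point (or fixed pair of points) participates in only $\OO(k^2 n)$ distinct light flips. The clean global potential argument handles heavy flips for free, but light flips can repeat endpoints many times as the matching churns, and the challenge is to find the right monotone quantity — some refinement of the line potential restricted to lines through the interior points $q$, or an angular/rotation potential attached to each point — that is both (i) monotone along the whole sequence and (ii) forced to move by at least one unit whenever a \emph{distinct} light flip (new $4$-subset) occurs. Getting the combinatorics of ``distinctness'' (a $4$-subset used once must pay, but re-uses are free) to interact correctly with a potential that only sees the current matching is the delicate point, and it is where I would expect to spend most of the effort.
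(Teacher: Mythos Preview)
Your overall architecture (split by a threshold $k$, bound the two halves separately, balance) is right, and your treatment of the large-drop side is essentially the paper's Lemma~\ref{lem:bigDrops}. The gap is on the other side, and it stems from the choice of splitting criterion.

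The crucial observation you are missing is that the line-potential drop $\Phi_L(M)-\Phi_L(f(M))$ depends \emph{only on the flip $f$} (on the four points $p_1,p_2,p_3,p_4$ and the pairing), not on the rest of $M$. Hence ``$\Phi_L$-drop $<k$'' is a property of the $4$-tuple itself, and the set of small-drop flips is a fixed combinatorial set that can be bounded by pure geometry, with no need for any monotone potential that tracks ``distinctness'' along the sequence. Your split by ``number of new crossings created'' does depend on the current matching, which is exactly why you then struggle to bound distinct light flips. If you switch to splitting by $\Phi_L$-drop, the light side becomes: fix the final segment $p_1p_4$; then $p_3$ must be among the $2k$ points of $P$ angularly closest to the ray $p_1p_4$ when sweeping around $p_1$, because otherwise the $k$ lines $p_1q_1,\ldots,p_1q_k$ through the swept-over points are all $f$-critical and force a drop of at least $k$. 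The same holds for $p_2$, giving at most $4k^2$ small-drop flips per choice of $p_1p_4$, hence $\OO(n^2k^2)$ in total.

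Finally, check your balancing arithmetic: $\OO(kn^2)$ against $\OO(n^3/k)$ gives $n^{5/2}$, and $\OO(k^2n)$ against $\OO(n^3/k)$ gives $n^{7/3}$; both are \emph{stronger} than $n^{8/3}$, not weaker, so neither can be the bound you were aiming for (and indeed neither argument you sketch goes through). The correct pair is $\OO(n^2k^2)$ against $\OO(n^3/k)$, balanced at $k=n^{1/3}$.
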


The proof of Theorem~\ref{thm:distinct} is based on a balancing argument from~\cite{ELS73} and is decomposed into two lemmas that consider a flip $f$ and two matchings $M$ and $M' = f(M)$. Similarly to~\cite{VLe81}, let $L$ be the set of lines defined by all pairs of points in $\binom{P}{2}$. For a line $\ell \in L$, let $\Phi_\ell(M)$ be the number of segments of $M$ crossed by $\ell$ and $\Phi_L(M)=\sum_{\ell \in L} \Phi_\ell(M)$. Notice that $\Phi(M) - \Phi(M')$ depends only on the flip $f$ and not on $M$ or $M'$. The following lemma follows immediately from the fact that $\Phi_L(M)$ takes integer values between $0$ and $\OO(n^3)$\cite{VLe81}.

\begin{lemma}
  \label{lem:bigDrops}
  For any integer $k$, the number of flips $f$ in a flip sequence with $\Phi(M) - \Phi(M') \geq k$ is $\OO(n^3/k)$.
\end{lemma}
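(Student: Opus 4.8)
The plan is to exploit the fact, underlying the $\OO(n^3)$ bound of \cite{VLe81}, that $\Phi\ (=\Phi_L)$ is integer-valued, bounded, and non-increasing along a flip sequence. First I would recall why no flip increases $\Phi$: for a flip $f = \flip{p_1p_3,p_2p_4}{p_1p_4,p_2p_3}$ and any line $\ell \in L$, the quantity $\Phi_\ell$ either drops by $1$ (if $\ell$ is $f$-critical), drops by $2$ (if $\ell$ is $f$-dropping), or is unchanged --- this is precisely the trichotomy already invoked in the proof of Theorem~\ref{thm:ConvexToGeneralBis}. Summing over $\ell \in L$ gives $\Phi(M) - \Phi(M') \ge 0$ for every flip; since $|L| = \binom{2n}{2} = \OO(n^2)$ and each line is crossed by at most $n$ of the $n$ segments, we also get $0 \le \Phi(M) \le \OO(n^3)$ for every matching in the sequence.

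Second comes a telescoping (pigeonhole) step. Write $M_0,\ldots,M_m$ for the matchings of the sequence. The per-flip drops telescope to $\sum_{i=1}^m \bigl(\Phi(M_{i-1}) - \Phi(M_i)\bigr) = \Phi(M_0) - \Phi(M_m) \le \Phi(M_0) = \OO(n^3)$. By the first step every term of this sum is nonnegative, and every flip with $\Phi(M_{i-1}) - \Phi(M_i) \ge k$ contributes at least $k$; hence if $N$ is the number of such flips then $Nk \le \OO(n^3)$, i.e. $N = \OO(n^3/k)$. (The claim is vacuous unless $k \ge 1$, so we may assume this.) This bounds the number of occurrences, hence a fortiori the number of distinct such flips, and the drop $\Phi(M) - \Phi(M')$ is well defined as a function of $f$ alone, as noted before the lemma, so the count in the statement makes sense.

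I do not expect a genuine obstacle here: everything rests on the three already-established properties of $\Phi$ (integrality, $\OO(n^3)$ upper bound, monotonicity under flips). The only point requiring a moment's care is that the non-big-drop flips also never increase $\Phi$, so they cannot ``refund'' potential and thereby inflate the number of big-drop flips --- but this is exactly the first step. The substantive part of Theorem~\ref{thm:distinct} lies in the complementary lemma bounding the number of \emph{distinct} flips whose drop is \emph{small}; combining the two and optimizing the threshold $k$ is what produces the $\OO(n^{8/3})$ bound.
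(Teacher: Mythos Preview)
Your proposal is correct and is exactly the argument the paper has in mind: the paper's ``proof'' is just the one-line remark that $\Phi_L$ is integer-valued between $0$ and $\OO(n^3)$, and you have spelled out the telescoping/monotonicity step that this remark is implicitly invoking. If anything, you are being more careful than the paper by explicitly noting that the non-big-drop flips cannot increase $\Phi$, which is what makes the pigeonhole work.
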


Lemma~\ref{lem:bigDrops} bounds the number of flips (distinct or not) that produce a large potential drop in a flip sequence. Next, we bound the number of distinct flips that produce a small potential drop. The bound considers all possible flips on a fixed set of points and does not depend on a particular flip sequence.

\begin{lemma}
  \label{lem:smallDrops}
  For any integer $k$, the number of \emph{distinct} flips $f$ with $\Phi(M) - \Phi(M') < k$ is $\OO(n^2k^2)$.
\end{lemma}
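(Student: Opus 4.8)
Let me think about this carefully. We need to count *distinct* flips $f = \flip{p_1p_3,p_2p_4}{p_1p_4,p_2p_3}$ on a fixed point set $P$ such that the potential drop $\Phi(M) - \Phi(M') < k$. A flip is determined by the 4-point set $\{p_1,p_2,p_3,p_4\}$ together with the choice of crossing pair and the choice of non-crossing pair, so there are $O(n^4)$ flips in total; we want to improve this to $O(n^2 k^2)$ for small-drop flips.

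The potential drop $\Phi(M)-\Phi(M') = \sum_{\ell \in L}(\Phi_\ell(M)-\Phi_\ell(M'))$, and each term is nonnegative (a flip never increases any $\Phi_\ell$, as recalled in Section~\ref{sec:ConvexToGeneral}). The flip changes $\Phi_\ell$ only for lines $\ell$ that separate $p_1p_4$ from $p_2p_3$; the drop gets a $+1$ contribution from each $f$-critical line and a $+2$ from each $f$-dropping line. Now the key geometric observation: as noted right after the case analysis in the proof of Theorem~\ref{thm:ConvexToGeneralBis}, every point $q$ lying in the triangle $p_1p_xp_4$ (where $p_x$ is the crossing point) contributes the two $f$-critical lines $qp_1$ and $qp_4$. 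Hence $\Phi(M)-\Phi(M') \geq 2\cdot(\text{number of points of }P\text{ strictly inside }\triangle p_1p_xp_4)$, and similarly for the opposite triangle $\triangle p_2p_xp_3$. So if the drop is less than $k$, each of these two triangles contains fewer than $k/2$ points of $P$ in its interior.

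So the plan reduces to: bound the number of 4-tuples $(p_1,p_2,p_3,p_4)$ forming a crossing quadrilateral such that the triangle $\triangle p_1 p_x p_4$ — equivalently, by an easy argument, the triangle $\triangle p_1 p_2 p_4$ or some fixed triangle spanned by three of the four points — contains few points of $P$. This is exactly the regime of the classical ``$k$-set / halving-lines''-style counting, and the Erdős–Lovász–Simonovits-type balancing argument from~\cite{ELS73}. Concretely: fix the ordered pair $(p_1,p_4)$ — there are $O(n^2)$ choices. The segment $p_1p_4$ together with the constraint that $\triangle p_1 p_x p_4$ contains $< k/2$ points forces $p_x$, and hence the lines $p_1p_3$ and $p_2p_4$, into a bounded region: walking outward from the line $p_1p_4$, the point $p_2$ (resp. $p_3$) must be among the first $O(k)$ points in angular order around $p_1$ (resp. around $p_4$) on the appropriate side, because each further point would add to the interior count of the triangle. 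That gives $O(k)$ choices for $p_2$ and $O(k)$ choices for $p_3$ once $(p_1,p_4)$ is fixed, for a total of $O(n^2 k^2)$.

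The step I expect to be the main obstacle is making the ``$p_2$ and $p_3$ each have only $O(k)$ choices'' claim fully rigorous, i.e. pinning down the precise region in which $p_2$ and $p_3$ must lie and verifying that a bounded-depth region around the segment $p_1p_4$ contains only $O(k)$ candidate points. The subtlety is that the relevant triangle has the unknown apex $p_x$, not a point of $P$, so one must argue that the interior-point count of $\triangle p_1 p_x p_4$ still controls the combinatorial position of $p_2$ and $p_3$ in the radial orders around $p_1$ and $p_4$ — one clean way is: sort the points of $P \setminus\{p_1,p_4\}$ on the $p_2$-side of line $p_1p_4$ by angle around $p_1$; if $p_2$ is the $j$-th such point, then the first $j-1$ of them all lie in the wedge at $p_1$ between $p_1p_4$ and $p_1p_2$, and a short argument shows all but $O(1)$ of them fall inside $\triangle p_1 p_x p_4$, forcing $j = O(k)$. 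Symmetrically for $p_3$ around $p_4$. The rest is bookkeeping: multiply the $O(n^2)$ choices of $(p_1,p_4)$ by the $O(k)\cdot O(k)$ choices of $(p_2,p_3)$, account for the constant number of ways to designate which pair is the crossing pair and which non-crossing pair is added, and conclude $O(n^2k^2)$.
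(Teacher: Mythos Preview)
Your overall strategy matches the paper's exactly: fix the final segment $p_1p_4$ (giving $\OO(n^2)$ choices), then argue that each of $p_2,p_3$ has only $O(k)$ candidates by an angular sweep, for a total of $\OO(n^2k^2)$. However, the specific mechanism you propose for the sweep step is broken.

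You claim that if $p_2$ is the $j$-th point in angular order around $p_1$ from the ray $p_1p_4$, then ``all but $O(1)$'' of the first $j-1$ points lie inside $\triangle p_1 p_x p_4$, and hence $j=O(k)$. This is false. The angular span of $\triangle p_1 p_x p_4$ at the vertex $p_1$ runs only from the ray $p_1p_4$ to the ray $p_1p_3$ (since $p_x$ lies on $p_1p_3$), and in the convex quadrilateral $p_1p_2p_3p_4$ the point $p_3$ is angularly \emph{closer} to $p_4$ than $p_2$ is (as seen from $p_1$). So every point whose angle lies strictly between that of $p_3$ and that of $p_2$ is in your wedge but outside the triangle; and even points at smaller angle than $p_3$ can be outside the triangle, lying beyond the opposite edge $p_4p_x$. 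Thus the triangle-interior count does not control the angular rank of $p_2$ at all (one can make the triangle arbitrarily thin while $p_2$ sits at an arbitrarily high angular rank).

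The paper's argument avoids triangle containment entirely and is both simpler and correct: it bounds $p_3$ (not $p_2$) around $p_1$. If $p_3$ is past the $k$-th point $q_k$ in the angular sweep from $p_1p_4$, then each of the lines $p_1q_1,\ldots,p_1q_k$ passes through $p_1$ and separates $p_4$ from both $p_2$ and $p_3$ (since $p_2$ is angularly even farther than $p_3$), hence is $f$-critical; this already forces $\Phi(M)-\Phi(M')\ge k$. No point needs to lie in any triangle. The symmetric sweep around $p_4$ bounds $p_2$. So your detour through the triangle $\triangle p_1p_xp_4$ is unnecessary, and swapping which vertex you bound around which pivot (bound $p_3$ around $p_1$, $p_2$ around $p_4$) makes the argument go through directly.
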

\begin{proof}
  Let $F$ be the set of flips with $\Phi(M) - \Phi(M') < k$ where $M'=f(M)$. We need to show that $|F| = \OO(n^2k^2)$. Consider a flip $f=\flip{p_1p_3,p_2p_4}{p_1p_4,p_2p_3}$ in $F$. Next, we show that there are at most $4k^2$ such flips with a fixed final segment $p_1p_4$. Since there are $\OO(n^2)$ possible values for $p_1p_4$, the lemma follows. We show only that there are at most $2k$ possible values for $p_3$. The proof that there are at most $2k$ possible values for $p_2$ is analogous.

  \begin{figure}[!ht]
    \centering
    \includegraphics[scale=\graphicsScale]{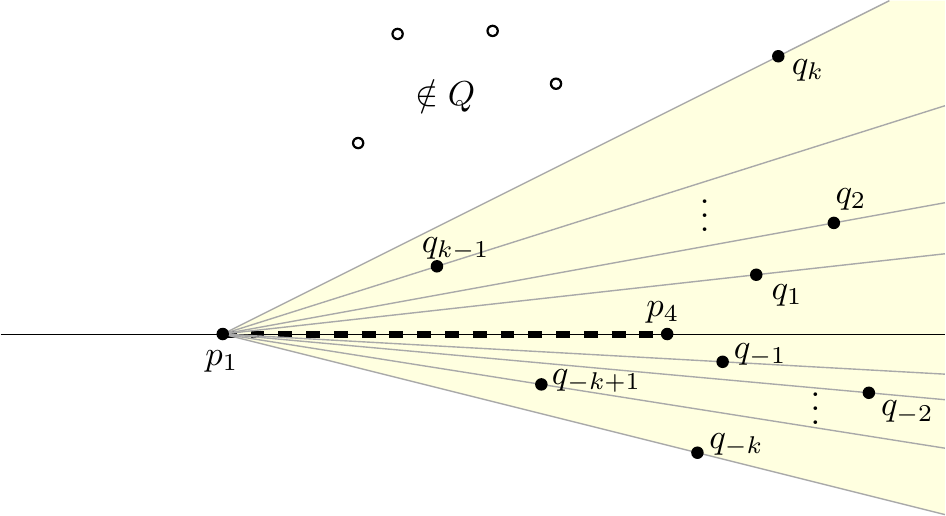}
    \caption{Illustration for the proof of Lemma~\ref{lem:smallDrops}.}
    \label{fig:smallDrops}
  \end{figure}

  We sweep the points in $P \setminus \{p_4\}$ by angle from the ray $p_1p_4$. As shown in Figure~\ref{fig:smallDrops}, let $q_1,\ldots,q_k$ be the first $k$ points produced by this sweep in one direction, $q_{-1}\ldots,q_{-k}$ in the other direction and $Q=\{q_{-k}\ldots,q_{-1},q_1,\ldots,q_k\}$. To conclude the proof, we show that $p_3$ must be in $Q$. 
  Suppose $p_3 \notin Q$ for the sake of a contradiction and assume without loss of generality that $p_3$ is on the side of $q_i$ with positive $i$. Then, consider the lines $L' = \{p_1q_1,\ldots,p_1q_k\}$.
  Notice that $L' \subseteq L$, $|L'| = k$, and for each $\ell \in L'$ we have $\Phi_\ell(M) > \Phi_\ell(M')$, which contradicts the hypothesis that $\Phi(M) - \Phi(M') < k$. 
\end{proof}

Theorem~\ref{thm:distinct} is a consequence of Lemmas~\ref{lem:bigDrops} and~\ref{lem:smallDrops} with $k = n^{{1}/{3}}$.

\section{Conclusion and Open Problems}

In Section~\ref{sec:reduction}, we showed a relationship among several different bounds on the maximum number of flips for a set of $n$ line segments. This result shows how upper bounds to the monochromatic matching version can be easily transferred to different versions. But they can also be applied to transfer lower bounds among different versions. For example, the lower bound of $\frac{3}{2}\binom{n}{2} - \frac{n}{4}$ for the red-blue matching case~\cite{DDFGR22} implies a lower bound of $\frac{1}{3} \binom{n}{2} - \frac{n}{2}$

for TSP. It is not clear if the constants in the TSP lower bound may be improved, perhaps by a more direct approach (or perhaps the lower bounds are not even asymptotically tight). However, we showed that all these versions are related by constant factors.

We can also use the results from Section~\ref{sec:reduction} to convert the bounds from Section~\ref{sec:ConvexToGeneral} to the general multigraph version, and hence to spanning trees and other types of graphs. In this case, the number $t$ of points in non-convex position needs to be replaced by the sum of the degrees of the points in non-convex position. If the graphs are dense, or have non-constant degree, we do not know of any non-trivial lower bounds or better upper bounds.

Another key property that we did not consider in this paper is the length $\dd(n)$ of the shortest flip sequence to untangle any set of $n$ segments. In general, we only know that $\dd(n) = \Omega(n)$ and $\dd(n) \leq \D(n) = \OO(n^3)$, for all versions. Whether similar reductions are possible is an elusive question. Furthermore, if the $n$ points are in convex position, then $\dd(n) = \Theta(n)$ for all versions. It is unclear if a transition in the case of $t$ points in non-convex position is possible.

The result of Theorem~\ref{thm:distinct} is based on the $\OO(n^2k^2)$ bound from Lemma~\ref{lem:smallDrops}. A better analysis of the dual arrangement could potentially improve this bound, perhaps to $\OO(n^2k)$.

The main open question, though, is whether the $\OO(n^3)$ bound to both $\D(n)$ and $\dd(n)$ can be improved for points in general position. The $\OO(n^{8/3})$ bound on the number of distinct flips presented in Section~\ref{sec:distinct} is a hopeful step in this direction, at least for $\dd(n)$. We were not able to find a set of line segments that requires the same pair of segments to be flipped twice in order to be untangled.

\bibliography{ref}

\begin{thebibliography}{10}

\bibitem{AMP15}
Oswin Aichholzer, Wolfgang Mulzer, and Alexander Pilz.
\newblock Flip distance between triangulations of a simple polygon is
  {NP}-complete.
\newblock {\em Discrete \& Computational Geometry}, 54(2):368--389, 2015.

\bibitem{ABCC11}
David~L Applegate, Robert~E Bixby, Va{\v{s}}ek Chv{\'a}tal, and William~J Cook.
\newblock The traveling salesman problem.
\newblock In {\em The Traveling Salesman Problem}. Princeton university press,
  2011.

\bibitem{Aro96}
Sanjeev Arora.
\newblock Polynomial time approximation schemes for {Euclidean} {TSP} and other
  geometric problems.
\newblock In {\em 37th Conference on Foundations of Computer Science}, pages
  2--11, 1996.

\bibitem{BeI08}
Sergey Bereg and Hiro Ito.
\newblock Transforming graphs with the same degree sequence.
\newblock In {\em Computational Geometry and Graph Theory}, pages 25--32, 2008.

\bibitem{BeI17}
Sergey Bereg and Hiro Ito.
\newblock Transforming graphs with the same graphic sequence.
\newblock {\em Journal of Information Processing}, 25:627--633, 2017.

\bibitem{BMS19}
Ahmad Biniaz, Anil Maheshwari, and Michiel Smid.
\newblock Flip distance to some plane configurations.
\newblock {\em Computational Geometry}, 81:12--21, 2019.
\newblock URL: \url{https://arxiv.org/abs/1905.00791}.

\bibitem{BBH19}
Marthe Bonamy, Nicolas Bousquet, Marc Heinrich, Takehiro Ito, Yusuke Kobayashi,
  Arnaud Mary, Moritz M{\"{u}}hlenthaler, and Kunihiro Wasa.
\newblock The perfect matching reconfiguration problem.
\newblock In {\em 44th International Symposium on Mathematical Foundations of
  Computer Science}, volume 138 of {\em LIPIcs}, pages 80:1--80:14, 2019.

\bibitem{BoM16}
{\'{E}}douard Bonnet and Tillmann Miltzow.
\newblock Flip distance to a non-crossing perfect matching.
\newblock {\em arXiv}, 1601.05989, 2016.
\newblock URL: \url{http://arxiv.org/abs/1601.05989}.

\bibitem{BJ20}
Nicolas Bousquet and Alice Joffard.
\newblock Approximating shortest connected graph transformation for trees.
\newblock In {\em Theory and Practice of Computer Science}, pages 76--87, 2020.

\bibitem{BuKi22}
Maike Buchin and Bernhard Kilgus.
\newblock Fr{\'e}chet distance between two point sets.
\newblock {\em Computational Geometry}, 102:101842, 2022.

\bibitem{ChC19}
Miroslav Chleb{\'\i}k and Janka Chleb{\'\i}kov{\'a}.
\newblock Approximation hardness of {Travelling} {Salesman} via weighted
  amplifiers.
\newblock In {\em 25th International Computing and Combinatorics Conference},
  pages 115--127, 2019.

\bibitem{DDFGR22}
Arun~Kumar Das, Sandip Das, Guilherme~D. da~Fonseca, Yan Gerard, and Bastien
  Rivier.
\newblock Complexity results on untangling red-blue matchings.
\newblock {\em Computational Geometry}, 111:101974, 2023.
\newblock URL: \url{https://arxiv.org/abs/2202.11857}, \href
  {https://doi.org/10.1016/j.comgeo.2022.101974}
  {\path{doi:10.1016/j.comgeo.2022.101974}}.

\bibitem{Dav10}
Donald Davendra.
\newblock {\em Traveling salesman problem: Theory and applications}.
\newblock BoD--Books on Demand, 2010.

\bibitem{ERV14}
Matthias Englert, Heiko R{\"o}glin, and Berthold V{\"o}cking.
\newblock Worst case and probabilistic analysis of the {2-Opt} algorithm for
  the {TSP}.
\newblock {\em Algorithmica}, 68(1):190--264, 2014.
\newblock URL:
  \url{https://link.springer.com/content/pdf/10.1007/s00453-013-9801-4.pdf}.

\bibitem{ELS73}
Paul Erd{\"o}s, L{\'a}szl{\'o} Lov{\'a}sz, A~Simmons, and Ernst~G Straus.
\newblock Dissection graphs of planar point sets.
\newblock In {\em A survey of combinatorial theory}, pages 139--149. Elsevier,
  1973.

\bibitem{EKM13}
P{\'e}ter~L Erd{\H{o}}s, Zolt{\'a}n Kir{\'a}ly, and Istv{\'a}n Mikl{\'o}s.
\newblock On the swap-distances of different realizations of a graphical degree
  sequence.
\newblock {\em Combinatorics, Probability and Computing}, 22(3):366--383, 2013.

\bibitem{GuPu06}
Gregory Gutin and Abraham~P Punnen.
\newblock {\em The traveling salesman problem and its variations}, volume~12.
\newblock Springer Science \& Business Media, 2006.

\bibitem{Hak62}
Seifollah~Louis Hakimi.
\newblock On realizability of a set of integers as degrees of the vertices of a
  linear graph. {I}.
\newblock {\em Journal of the Society for Industrial and Applied Mathematics},
  10(3):496--506, 1962.

\bibitem{Hak63}
Seifollah~Louis Hakimi.
\newblock On realizability of a set of integers as degrees of the vertices of a
  linear graph {II}. uniqueness.
\newblock {\em Journal of the Society for Industrial and Applied Mathematics},
  11(1):135--147, 1963.

\bibitem{HNU99}
Ferran Hurtado, Marc Noy, and Jorge Urrutia.
\newblock Flipping edges in triangulations.
\newblock {\em Discrete \& Computational Geometry}, 22(3):333--346, 1999.

\bibitem{phdJof}
Alice Joffard.
\newblock {\em Graph domination and reconfiguration problems}.
\newblock PhD thesis, Université Claude Bernard Lyon 1, 2020.

\bibitem{Law72}
Charles~L Lawson.
\newblock Transforming triangulations.
\newblock {\em Discrete Mathematics}, 3(4):365--372, 1972.

\bibitem{LuP15}
Anna Lubiw and Vinayak Pathak.
\newblock Flip distance between two triangulations of a point set is
  {NP}-complete.
\newblock {\em Computational Geometry}, 49:17--23, 2015.

\bibitem{Mit99}
Joseph~SB Mitchell.
\newblock Guillotine subdivisions approximate polygonal subdivisions: A simple
  polynomial-time approximation scheme for geometric {TSP}, k-{MST}, and
  related problems.
\newblock {\em SIAM Journal on computing}, 28(4):1298--1309, 1999.

\bibitem{NiN18}
Naomi Nishimura.
\newblock Introduction to reconfiguration.
\newblock {\em Algorithms}, 11(4), 2018.

\bibitem{OdW07}
Yoshiaki Oda and Mamoru Watanabe.
\newblock The number of flips required to obtain non-crossing convex cycles.
\newblock In {\em Kyoto International Conference on Computational Geometry and
  Graph Theory}, pages 155--165, 2007.

\bibitem{Pil14}
Alexander Pilz.
\newblock Flip distance between triangulations of a planar point set is
  apx-hard.
\newblock {\em Computational Geometry}, 47(5):589--604, 2014.

\bibitem{RaSm98}
Satish~B Rao and Warren~D Smith.
\newblock Approximating geometrical graphs via “spanners” and
  “banyans”.
\newblock In {\em Proceedings of the thirtieth annual ACM symposium on Theory
  of computing}, pages 540--550, 1998.

\bibitem{Heu13}
Jan van~den Heuvel.
\newblock The complexity of change.
\newblock {\em Surveys in Combinatorics}, 409:127--160, 2013.

\bibitem{VLe81}
Jan van Leeuwen.
\newblock Untangling a traveling salesman tour in the plane.
\newblock In {\em 7th Workshop on Graph-Theoretic Concepts in Computer
  Science}, 1981.

\bibitem{Wil99}
Todd~G Will.
\newblock Switching distance between graphs with the same degrees.
\newblock {\em SIAM Journal on Discrete Mathematics}, 12(3):298--306, 1999.

\bibitem{WCL09}
Ro{-}Yu Wu, Jou{-}Ming Chang, and Jia{-}Huei Lin.
\newblock On the maximum switching number to obtain non-crossing convex cycles.
\newblock In {\em 26th Workshop on Combinatorial Mathematics and Computation
  Theory}, pages 266--273, 2009.

\end{thebibliography}
\end{document}